\newtheoremstyle{asdf}% name
  {}%      Space above
  {}%      Space below
  {}%         Body font
  {}%         Indent amount (empty = no indent, \parindent = para indent)
  {\bfseries}% Thm head font
  {:}%        Punctuation after thm head
  {\newline}%     Space after thm head: " " = normal interword space;
\newtheorem{thm}{Theorem}[section]
\newtheorem{prop}[thm]{Proposition}
\newtheorem{defn}{Definition}
\newtheorem{remark}{Remark}
\theoremstyle{asdf}
\let\hat\widehat
\title{Random Differential Privacy}
\author{Rob Hall \and Alessandro Rinaldo \and Larry Wasserman}
\begin{document}

\maketitle

\begin{abstract}
We propose a relaxed privacy definition
called {\em random differential privacy} (RDP).
Differential privacy requires that adding
any new observation to a database will have small
effect on the output of the data-release procedure.
Random differential privacy requires that adding
a {\em randomly drawn new observation} to a database will have small
effect on the output.
We show an analog of the
composition property of differentially private procedures which
applies to our new definition.
We show how to release an RDP histogram
and we show that RDP histograms are much more accurate
than histograms obtained
using ordinary differential privacy.
We finally show an analog of the global sensitivity framework for the release of functions under our privacy definition.
\end{abstract}

\section{Introduction}

{\em Differential privacy} (DP)
(\cite{Dwork:06})
is a type of privacy guarantee that has
become quite popular in the computer science literature.  The
advantage of differential privacy is that it gives a strong and
mathematically rigorous guarantee.  The disadvantage is that the
strong privacy guarantee often comes at the expense of the statistical utility of the released information.
We propose a weaker
notion of privacy,
called ``random differential privacy'' (RDP),
under which it is possible to achieve better accuracy.

The privacy guarantee provided by RDP represents a radical
weakening of the ordinary differential privacy.
This could be a cause for concern for those who want very
strong privacy guarantees.
Indeed, {\em we are not suggesting the RDP should replace
ordinary differential privacy.}
However, as we shall show in this paper
(and has been observed many times in the past),
differential privacy can lead to large information losses
in some cases (see e.g., \cite{xiaolin}).
Thus, we feel there is great value
in exploring weakened versions of differential privacy.
In other words,
we are proposing a new privacy definition as a way of
exploring the privacy/accuracy tradeoff.

We begin by introducing ordinary differential privacy and setting up some notation.  We then explore the lower limits for accuracy of differentially private techniques in the context of histograms.  We introduce a concept which parallels minimaxity in statistics, and identify the minimax risk for a differentially private histogram.  We describe an important subset of these minimax differentially private histograms which we show to have risk which is uniformly lower bounded at a rate which is linear in the dimension of the histogram.  We then introduce our proposed relaxation to differential privacy, under which our technique enjoys the same minimax risk, but with a lower bound which depends only on the size of the support of the histogram (namely, the number of nonzero cells).  Thus we show that in the context of sparse histograms, the relaxation allows for a strictly better data release.  We also demonstrate some important properties of our relaxation, such as an analog of the composition lemma.

\section{Differential Privacy (DP)}

\subsection{Definition}

Let $X=(X_1,\ldots, X_n)\in {\cal X}^n$ be an input database with $n$ observations
where $X_i \in {\cal X}$.
The goal is to produce some output $Z\in {\cal Z}$.  For example the inputs may consist of database rows in which each column is a measurement of an individual, and the output is the number of individuals having some property.
Let $Q_n(\ \cdot\ |X)$ be a conditional distribution for $Z$ given $X$.
Write $X\sim X'$ if
$X,X'\in {\cal X}^n$ and
$X$ and $X'$ differ in one coordinate.
We say that $X$ and $X'$ are
{\em neighboring databases.}
\footnote{In some papers, the definition is changed
so that one sample is a strict subset of the other, having exactly one
less element.  Although this definition is perhaps slightly stronger,
we do not use it and remark that the approaches we present below may
all be fit into this framework if so desired.}

We say $Q_n$ satisfies $\alpha$ differential privacy if,
for all measurable $B\subset {\cal Z}$
and all $X\sim X' \in {\cal X}^n$,
\begin{equation}
e^{-\alpha} \leq
\frac{Q_n(Z\in B|X)}{Q_n(Z\in B|X')} \leq e^\alpha.
\end{equation}
The intuition is that, for small $\alpha>0$,
the value of one individual's data has small effect
on the output.
We consider any DP
algorithm to be a family of distributions
$\mathbb{Q}_n$ over the output space $\mathcal{Z}$.
We index a
family of distributions by $n$ to show the size of the dataset.

It has been shown by researchers in privacy that
differential privacy provides a very strong guarantee.
Essentially it means that whether or not one particular individual
is entered in the database,
has negligible effect on the output.
The research in differential privacy is vast.
A few key references are
\cite{Dwork:06},
\cite{Dwork:07},
\cite{barak2007paa},
\cite{DL09},
\cite{BDMN05} and references therein.

\subsection{Noninteractive Privacy and Histograms}

Much research on differential privacy
focuses on the case where
$Z$ is a response to some query such as
``what is the mean of the data.''
A simple way to achieve differential privacy in that case
is to add some noise to the mean of $X$
where the noise has a Laplace distribution.
The user may send a sequence of such queries.
This is called {\em interactive privacy}.
We instead focus on the
{\em noninteractive privacy}
where the goal is to output a whole database (or a ``synthetic dataset'')
$Z = (Z_1,\ldots, Z_N)$.
Then the user is not restricted to a small number of queries.

One way to release a private database
is to first release a privatized
histogram.
We can then draw an arbitrarily large sample
$Z=(Z_1,\ldots, Z_N)$ from the histogram.
It is easy to show that if the histogram
satisfies DP then $Z$ also satisfies DP.
Hence, in the rest of the paper,
we focus on constructing
a private histogram.

We consider privatization mechanisms which are permutation invariant with respect to their inputs (i.e.,
those distributions which treat the values $x_i$ as a set rather than a
vector)  in the context of histograms this appears to be a very mild restriction.  

We partition the sample space ${\cal X}$ into $k$ cells (or bins) $\{B_j\}_{j=1}^k$.\footnote{
In this paper, $k$ is taken as a given integer.
The problem of choosing an optimal $k$ in a private matter
is the subject of future work.}
We consider the input to be a lattice point in the
$k$-simplex, by taking the function:
$\theta^n(x_1,\ldots,x_n) = (\theta_1,\ldots,\theta_k)$,
$\theta_j = \frac{1}{n}\sum_{i=1}^n\mathbf{1}\{x_i \in B_j\}.$
The image of this mapping $\Theta = \theta^n(\mathcal{X}^n)$ is the set of
lattice points in the simplex which correspond to histograms of $n$
observations in $k$ bins. Note that this is in essence a ``normalized histogram'' since the elements sum to one.  This set depends on $k$ although we suppress this notation.  For the remainder of this paper we consider the output space $\mathcal{Z}$ to be the same as the input space (i.e., a normalized histogram).  
%For $x$ in bin $B_j$ define the histogram density estimator
%$\hat f(x) = \hat \theta_j /{\rm Volume}(B_j)$.

Now we give a concrete example of a $Q_n$ which achieves differential privacy.  Define
$z_j = \theta_j + 2L_j/(n\alpha)$
where $L_1,\ldots, L_k$ are independent draws
from a Laplace distribution with mean zero and rate one.
Then $(z_1,\ldots, z_k)$ satisfy DP (see e.g.,\cite{Dwork:06}).
However, the $z_i$ themselves do not represent a histogram,
because they can be negative and they do not necessarily sum to one.
Hence we may take, for example:

\begin{equation}
\label{delta_z}
\delta(z) = \arg\min_{\theta \in \Theta}\|z-\theta\|_1
\end{equation}

\noindent where we use the $\ell_1$ norm:
$||x||_1 = \sum_j |x_j|$. This procedure hence results in a valid histogram.  Note that $\delta(z)$ satisfies the differential privacy, since each subset of values it may take clearly corresponds to a measurable subset of $\mathbb{R}^k$.  Since the differential privacy held for the real vector then it also holds for the projection (see e.g., \cite{WZ}).  We will refer to this as the
{\em histogram perturbation method}
(see e.g., \cite{WZ}).
There are other methods for generating
differentially private histograms, and our results below concern hold over a large subset of all the possible techniques available (to be made precise after proposition 3.2).  Hence our results apply to more than the above concrete scheme.

\section{Lower Bounds for Accuracy with Differential Privacy}

To motivate the need for relaxed versions of
differential privacy, we consider here the accuracy
of differentially private histograms.  We evaluate a differentially private procedure in terms of its ``risk'' which is a natural measure of accuracy taken from statistics.  We consider the $\ell_1$ loss function, and the associated risk:

\begin{equation}
\label{risk}
R(\theta,Q_n) = \int_{\Theta} \|\hat{\theta} - \theta\|_1 dQ_n(\hat{\theta}|\theta).
\end{equation}

\noindent where $\hat{\theta}$ is the output of the differentially private algorithm, $\theta$ is the input histogram, and the distribution $Q_n$ is the one induced by the randomized algorithm.  Typically this risk will be a non-constant function of the parameter $\theta$ and of the distribution $Q_n$.  Therefore we consider the ``minimax risk'' which is the smallest achievable worst-case risk, and gives a measure of the hardness of the problem which does not depend on a particular choice of procedure:

\begin{equation}
\label{minimax_risk}
R^\star = \inf_{Q_n}\sup_{\theta \in \Theta} R(\theta,Q_n)
\end{equation}

We next describe the minimax risk of
the best fully differentially private mechanism $Q_n$.  

\begin{prop}
$$
R^\star \geq c_0\frac{k-1}{\alpha n}$$
\end{prop}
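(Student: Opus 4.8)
The plan is to prove this lower bound by Assouad's method, with the usual total-variation step replaced by a group-privacy bound. First I would restrict the supremum over $\Theta$ to a finite sub-family indexed by the hypercube $\{-1,+1\}^r$ with $r=\lfloor k/2\rfloor$. Pair the bins as $(1,2),(3,4),\dots$, fix a budget of $2s$ observations per pair, and dump any leftover observations into a single fixed bin. For $\omega\in\{-1,+1\}^r$, let $\theta^\omega$ be the histogram placing $s+\omega_j t$ observations in bin $2j-1$ and $s-\omega_j t$ in bin $2j$, where $t$ is an integer perturbation size to be chosen. This is a valid element of $\Theta$ provided $0\le t\le s$ and $2rs\le n$, i.e. provided $n\gtrsim k/\alpha$ once $t\sim 1/\alpha$; I would record this mild regularity requirement, together with the assumption that $\alpha$ is bounded away from $1$ so that the optimal $t$ is at least $1$.

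Two structural facts drive the argument. Since the $\ell_1$ loss is separable across coordinates, $\|\hat\theta-\theta^\omega\|_1$ splits into $r$ per-pair terms, and on pair $j$ the hypotheses $\omega_j=+1$ and $\omega_j=-1$ are separated by $4t/n$ in $\ell_1$; this is exactly the product structure Assouad's lemma requires, with per-coordinate separation $\rho=2t/n$ (half the pairwise distance). Second, flipping a single coordinate $\omega_j$ moves $2t$ observations from one bin of the pair to the other, so the databases underlying $\theta^\omega$ and a single-flip neighbor $\theta^{\omega'}$ differ in $2t$ coordinates. Iterating the $\alpha$-differential-privacy inequality along a path of $2t$ neighboring databases gives the group-privacy bound $e^{-2t\alpha}\le Q_n(B\mid X_\omega)/Q_n(B\mid X_{\omega'})\le e^{2t\alpha}$ for every measurable $B$, which forces $1-\|Q_n(\cdot\mid X_\omega)-Q_n(\cdot\mid X_{\omega'})\|_{TV}\ge e^{-2t\alpha}$.

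Feeding these into Assouad's lemma yields $R^\star\ge \tfrac{r}{2}\,\rho\,e^{-2t\alpha}=\tfrac{rt}{n}e^{-2t\alpha}$, up to the universal Assouad constant. With $r=\lfloor k/2\rfloor$ this is of order $\tfrac{k}{n}\,t\,e^{-2t\alpha}$, and I would finish by choosing $t\approx 1/(2\alpha)$ (rounded to the nearest integer), which maximizes $t e^{-2t\alpha}$ and gives $t e^{-2t\alpha}=\tfrac{1}{2e\alpha}\gtrsim 1/\alpha$. This produces $R^\star\ge c_0\,\tfrac{k}{\alpha n}\ge c_0\,\tfrac{k-1}{\alpha n}$ for an absolute constant $c_0$, as claimed.

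The main obstacle I anticipate is the group-privacy-to-indistinguishability step and its clean insertion into Assouad: one must fix the precise form of the lemma being invoked (the per-coordinate separation constant, and whether the $1-\mathrm{TV}$ term enters as a sum or a minimum over coordinates), and confirm that the likelihood-ratio bound $e^{-2t\alpha}$ survives the averaging over the remaining coordinates. By contrast, the remaining details — integrality of $t$, the budget constraints $t\le s$ and $2rs\le n$, and the harmless treatment of the leftover bin — are routine bookkeeping that affect only the value of $c_0$.
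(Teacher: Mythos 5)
Your proposal is correct and follows essentially the same route as the paper: a hypercube of histograms separated by $\Theta(t/n)$ per coordinate, a group-privacy bound $e^{\pm 2t\alpha}$ on the likelihood ratio obtained by iterating the DP inequality along a path of neighboring databases, Assouad's lemma, and the choice $t\asymp 1/\alpha$. The only (cosmetic) differences are that the paper uses a $(k-1)$-dimensional hypercube with one absorbing bin rather than paired bins, and lower-bounds the affinity via the KL divergence and the Kullback--Csiszar--Kemperman inequality ($1-\sqrt{\alpha\tau/2}$) where you bound $1-\mathrm{TV}\ge e^{-2t\alpha}$ directly from the likelihood-ratio bound; both handle the integrality of $t$ with the same implicit restriction on $\alpha$.
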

\begin{proof}
The proof uses a standard method for deriving minimax lower bounds in statistical estimation. Consider the $k-1$- dimensional hypercube
$$
\left\{ \left( \frac{\sigma_1\tau}{n},\ldots,
\frac{\sigma_{k-1}\tau}{n}, \frac{(n -\sum_{i=1}^{k-1}\sigma_i)\tau}{n} \right) :
\sigma_i \in \{0,1\}
\right\}.
$$

Take $\theta,\theta^\prime$, to be neighboring corners of
this hypercube (namely two elements which differ in exactly one coordinate
$\sigma_i$).  Take the KL divergence between the conditional distributions at these corners to be:

$$KL\left(Q_n(\cdot|\theta)\big\|Q_n(\cdot|\theta^\prime)\right) = \int_{\Theta} \log\frac{Q_n(\hat{\theta}|\theta)}{Q_n(\hat{\theta}|\theta^\prime)}dQ_n(\hat{\theta}|\theta)$$

\noindent By considering a sequence of points corresponding to neighboring inputs, we find the ratio of densities to have the upper bound: $\frac{Q_n(\hat{\theta}|\theta)}{Q_n(\hat{\theta}|\theta^\prime)} \leq e^{\alpha\tau}$ since $\tau$ elements of the input have to change to move from $\theta$ to $\theta^\prime$, and the ratio at each step is bounded by $e^\alpha$.  Therefore the KL divergence obeys $KL\left(Q_n(\cdot|\theta)\big\|Q_n(\cdot|\theta^\prime)\right) \leq \alpha\tau.$ The ``affinity'' between the two distributions is:

$$
\|Q_n(\cdot|\theta) \wedge Q_n(\cdot|\theta^\prime)\| = \int_\Theta \min\left\{Q_n(\hat{\theta}|\theta),Q_n(\hat{\theta}|\theta^\prime)\right\} d\hat{\theta}.
$$

The Kullback-Csiszar-Kemperman inequality \cite{bin_yu} yields
a lower bound on the affinity between these distributions:
$$
\|Q_n(\cdot|\theta) \wedge Q_n(\cdot|\theta^\prime)\| \geq 1 - \sqrt{\frac{\alpha\tau}{2}}.
$$
Assouad's lemma (see \cite{bin_yu} again) thus gives the lower bound:
$$
R^\star \geq (k-1)\frac{\tau}{2n}\left( 1 - \sqrt{\frac{\alpha\tau}{2}}  \right).
$$
Taking $\tau = t/\alpha$ gives
$$
R^\star \geq (k-1)\frac{t}{2\alpha n}\left( 1 - \sqrt{\frac{t}{2}} \right).
$$
For $\alpha < 1$ we may take $t < 1$, which results in the parenthetical expression being positive.
\end{proof}

\begin{remark}
The previous result demonstrates that the minimax risk of the
differentially private histogram is of the order
$O\left(\frac{k}{\alpha n}\right)$.
\end{remark}

\begin{remark}
Hardt and Talwar \cite{HT} have a similar result although
their setting is somewhat different.
In particular, they do not restrict to
the space of histograms based on $n$ observations.
\end{remark}

The above results demonstrates that for every differentially private scheme, there is at least one input for which the risk is growing in the order shown (in fact, at least one point in every hypercube of side length $\tau/n$).  However the prospect exists that at many other inputs the risk is much lower.  We now demonstrate that this is not the case when $k=2$, by presenting a uniform lower bound for the risk among all minimax schemes.  In the case of $k=2$ the output may be regarded as a single number $\frac{a}{n}$ where $a \in \{0,\ldots,n\}$,
which gives the proportion of the data points in the first bin.  Our result will show that in a sense, the minimax differential privacy schemes are similar to ``equalizer rules'' in the sense that the risk is on the same order for every input.

\begin{prop}
For $k=2$ for any $Q_n$ which achieves $\sup_\theta R(\theta,Q_n) \leq \frac{c_0}{\alpha n}$ we have that $\inf_\theta R(\theta,Q_n) \geq \frac{c_1}{\alpha n}$
\end{prop}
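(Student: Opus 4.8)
The plan is to reduce the statement to a one-dimensional concentration argument on the output lattice and then exploit the differential-privacy constraint to propagate mass between neighboring inputs. First I would use that for $k=2$ a histogram is determined by its first coordinate, so I may identify an input with an integer $i\in\{0,\dots,n\}$ (meaning $\theta_1=i/n$) and an output with $j\in\{0,\dots,n\}$ (meaning $\hat\theta_1=j/n$), and write $p_i(j)=Q_n(\hat\theta_1=j/n\mid\theta_1=i/n)$. Since $\|\hat\theta-\theta\|_1=2|\hat\theta_1-\theta_1|$ when there are two bins, the risk becomes $R(i)=\frac{2}{n}M_i$ where $M_i=\sum_j|j-i|\,p_i(j)$ is the mean absolute deviation of the output about the truth. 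In this language the hypothesis reads $M_i\le C$ for every $i$, with $C=c_0/(2\alpha)$, and the goal is to find a constant $c>0$ with $M_i\ge c/\alpha$ for every $i$, whence $R(i)\ge 2c/(\alpha n)$ and one may take $c_1=2c$.

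I would argue by contraposition: assuming $M_{i_0}$ is small at some input $i_0$, I show this forces the risk at a nearby input to exceed $C$. Concentration comes from Markov's inequality: the window $S=\{j:|j-i_0|<2M_{i_0}\}$ carries mass $P_{i_0}(S)\ge\tfrac12$. The privacy constraint, chained across $m$ neighboring pairs, gives $p_{i_0+m}(j)\ge e^{-\alpha m}p_{i_0}(j)$ for every $j$, so a fixed fraction of the mass of $p_{i_0}$ survives in $p_{i_0+m}$, namely $P_{i_0+m}(S)\ge\tfrac12 e^{-\alpha m}$. But each $j\in S$ lies at distance at least $m-2M_{i_0}$ from $i_0+m$, so
\begin{equation*}
M_{i_0+m}\ \ge\ (m-2M_{i_0})\,P_{i_0+m}(S)\ \ge\ (m-2M_{i_0})\,\tfrac12\,e^{-\alpha m}.
\end{equation*}
Choosing $m\approx 1/\alpha$ makes $e^{-\alpha m}$ a fixed constant while keeping $m$ of order $1/\alpha$; if $M_{i_0}$ were below a suitable constant multiple of $1/\alpha$, the right-hand side would exceed $C=c_0/(2\alpha)$, contradicting the uniform upper bound. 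Reading off the resulting inequality pins down the admissible value of $c$.

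Two points remain. Near the ends of $\{0,\dots,n\}$ there may be no room to move right by $m\approx1/\alpha$; there I would instead pass to $i_0-m$, using $p_{i_0-m}(j)\ge e^{-\alpha m}p_{i_0}(j)$ together with the fact that the points of $S$ then sit about $m$ to the right of $i_0-m$, so at least one direction always fits inside the lattice whenever $n\gtrsim1/\alpha$, which is the only regime of interest (when $\alpha n$ is of constant order the target rate $1/(\alpha n)$ is itself of constant order, comparable to the maximal possible loss, so the statement carries no content). The main obstacle is the constant bookkeeping: because privacy transports only an $e^{-\alpha m}\le e^{-1}$ fraction of the mass over a window of width $1/\alpha$, the argument closes only when $c_0$ lies below a threshold of order $1/e$. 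This is precisely the regime of the minimax-order schemes isolated by Proposition 3.1, so the hypothesis is satisfiable; making the dependence of $c_1$ on $c_0$ explicit, and verifying that the minimax constant falls under this threshold, is the one genuinely delicate part of the write-up.
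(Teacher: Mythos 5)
Your core mechanism is exactly the paper's: Markov's inequality to localize the output distribution, the chained privacy ratio $e^{-\alpha m}$ over $m\approx c/\alpha$ neighboring databases to transport that mass to a nearby input, and a distance-times-mass lower bound on the risk there. But the contrapositive packaging introduces a genuine gap that the paper's direct version does not have. Because you apply Markov at the hypothetically good point $i_0$ and then try to \emph{contradict} the uniform upper bound at $i_0+m$, the largest risk your argument can certify at $i_0+m$ is $\sup_m \frac{2}{n}\cdot\frac{m}{2}e^{-\alpha m}\approx \frac{1}{e\alpha n}$, so the contradiction closes only when $c_0\lesssim 1/e$ --- as you note. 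The proposition, however, is stated for an arbitrary constant $c_0$ (with $c_1$ allowed to depend on it), and the canonical instance --- the Laplace perturbation mechanism for $k=2$ --- has $\sup_\theta R \approx 4/(\alpha n)$, far above your threshold, so your proof says nothing about it. Your attempted rescue via Proposition 3.1 runs backwards: that proposition is a \emph{lower} bound on the minimax risk (showing no scheme beats $c_0\approx 0.15$ for $k=2$); it does not assert that any scheme achieves $c_0<1/e$, so the regime in which your argument applies could in principle be empty.

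The fix is to drop the contraposition and argue as the paper does: fix an arbitrary target $\theta_0$, choose $\theta_1$ with $\|\theta_0-\theta_1\|_1=2c/(\alpha n)$ for any $c>c_0$, and apply Markov at $\theta_1$ using the \emph{uniform} upper bound (which holds there too), giving mass at least $1-c_0/c$ within $\ell_1$-radius $c/(\alpha n)$ of $\theta_1$. Transporting over $c/\alpha$ data-point changes costs $e^{-c}$, and every point of that ball lies at distance at least $c/(\alpha n)$ from $\theta_0$, yielding $R(\theta_0)\geq \frac{c}{\alpha n}\bigl(1-\frac{c_0}{c}\bigr)e^{-c}=(c-c_0)e^{-c}/(\alpha n)$, which is positive for every $c>c_0$ (optimal at $c=1+c_0$, giving $c_1=e^{-1-c_0}$). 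No contradiction is needed and no restriction on $c_0$ arises. Your boundary remark (moving left when $i_0+m$ exceeds the lattice) is correct and applies equally to the direct version.
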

\begin{proof}
Note that for any $\theta_1$ and $c>c_0$, due to the uniform upper bound on the risk, Markov's inequality gives

$$\int_\mathcal{Z} \mathbf{1}\{|\hat{\theta}-\theta_1|\leq \frac{c}{\alpha n}\} \ dQ_n(\hat{\theta}|\theta_1) \geq 1 - \frac{c_0}{c}.$$
Therefore, due to the constraint of differential privacy, we have that, for any $\theta_0$,

$$\int_\mathcal{Z} \mathbf{1}\{|\hat{\theta}-\theta_1|\leq \frac{c}{\alpha n}\}\ dQ_n(\hat{\theta}|\theta_0) \geq \left(1 - \frac{c_0}{c}\right)\text{exp}\left\{-\frac{\alpha n}{2}\|\theta_0-\theta_1\|_1 \right\}$$

\noindent Since $\frac{n}{2}\|\theta_0-\theta_1\|$ elements of the input change to move from $\theta_0$ to $\theta_1$.  Therefore taking $\theta_1$ to give $\|\theta_0-\theta_1\| = \frac{2c}{\alpha n}$ gives

$$R(\theta_0,Q_n) \geq \frac{c}{\alpha n} \left(1 - \frac{c_0}{c}\right)e^{-c} = \frac{c_1}{\alpha n}.$$

As $\theta_0$ is arbitrary, this gives a uniform lower bound under the conditions above.
\end{proof}

For the relaxation of differential privacy given in definition 2.2 of
\cite{HT}, the above result remains intact for large enough $n$.  The
relaxation is:
$$
Q_n(z|X) \leq Q_n(z|X')e^{\alpha} + \eta(n)
$$
\noindent where $\eta(n)$ is negligible (i.e., tending to zero faster than
any inverse polynomial in $n$).  Thus via the same technique as above,
we have
$$
R(\theta_0,\delta,\mathbb{Q}_n) \geq
\frac{c}{\alpha n} \left((1 - \frac{c_0}{c})e^{-c} - c_2\eta(n)\right) =
\frac{c_1 - \eta(n)}{\alpha n}.
$$
For large enough $n$ this latter term is
bounded from below by $\frac{c_3}{\alpha n}$.  This indicates that the
above relaxation of differential privacy will not be
useful in achieving higher accuracy.

For $k > 2$, we may write

$$
R(\theta,Q_n) =
\sum_{i=1}^kR_i(\theta,Q_n)
$$

\noindent With

$$
R_i(\theta,Q_n) \stackrel{\text{def}}{=}
\int_{\mathcal{Z}} |\hat{\theta} - \theta_i| dQ_n(\hat{\theta}|\theta),
$$

where the subscript means the $i^{th}$ coordinate.  Thus, whenever we have
that $R_i \leq \frac{c_0}{\alpha n}$ uniformly over $i$, we have that
$R(\theta,\delta,\mathbb{Q}_n) \geq \frac{c_1(k-1)}{\alpha n}$.
Therefore the only opportunity to improve upon the rate of
$\frac{k}{\alpha n}$ is when some $\theta$ have some coordinate $i$ at
which the risk upper bound does not apply.

We conclude by remarking that we have demonstrated, that for a certain class of differentially private algorithms which achieve the ``minimax rate,'' their risk is uniformly lower bounded at the same rate.  The rate in question is linear in $k$, which is problematic when $k$ is large relative to $n$.  It remains an open question whether there are different techniques which achieve the minimax rate, yet do not have this property.  Such a technique would have to lose the uniform upper bound on the coordinate-wise risk.  Below, we present a weakening of differential privacy, which admits release mechanisms, which both keep the uniform upper bound on the coordinate-wise risk, and also have a minimax risk which is growing only in the support of the histogram (namely, the number of cells which contain observations).

%A corollary of the above
%is the following explicit bound
%in the super-sparse case
%$\hat\theta = (1,0,\ldots, 0)$.
%Although the result can be deduced from the above,
%we provide a direct proof sketch in the appendix.
%
%\begin{thm}
%\label{thm::extra}
%Suppose that $n\to \infty$,
%$k = k(n) \to \infty$ and $k/n \to 0$.
%Let $\hat\theta = (1,0,\ldots, 0)$.
%Then
%$$
%\mathbb{E}_Q ||\hat\theta - \hat\theta^*||_1 \geq
%\frac{k}{n\alpha} + o\left(\frac{k}{n\alpha}\right).
%$$
%\end{thm}

%We shall now introduce random differential privacy and show it allows
%for higher accuracy even in sparse cases, and without sacrificing the
%coordinate-wise uniform upper bound on the risk.  It remains an open
%question of whether it is possible to achieve a lower rate for some
%histograms, within the confines of differential privacy, with a
%mechanism which still achieves the minimax rate presented above.  The
%above concerns only exclude an important subset of such mechanisms.

\section{Random Differential Privacy}

In random differential privacy (RDP) we view the
data $X=(X_1,\ldots, X_n)$
as random draws from an unknown distribution $P$.
This is certainly the case in statistical sampling and of course
it is the usual assumption in most learning theory.
Let us denote the observed values of the random variables
$X=(X_1,\ldots, X_n)$ by
$x=(x_1,\ldots,x_n)$.
Recall that under DP,
$Q(Z\in B|x_1,\ldots,x_n)$
is not strongly affected if we replace some value
$x_i$ with another value $x_i'$.
We continue to restrict to the case in which,
$Q(Z\in B|x_1,\ldots,x_n)$
is invariant to permutations of
$(x_1,\ldots,x_n)$.
Thus we may restate DP by saying that
$Q(Z\in B|x_1,\ldots,x_n)$
is not strongly affected if we replace $x_n$ by some other
arbitrary value $x_n'$.
In RDP, we require instead that the distribution $Q_n(\cdot|x_1,\ldots,x_n)$
is not strongly affected if we replace $x_n$ by some
new $x_n'$ which is also randomly drawn from $P$.

\begin{defn}[$(\alpha,\gamma)$-Random Differential Privacy]
We say that a randomized algorithm $Q_n$ is $(\alpha,\gamma)$-Randomly
Differentially Private when:
$$
\mathbb{P}\left( \forall B \subseteq \mathcal{Z}, \
e^{-\alpha} \leq
\frac{Q_{n}(Z\in B|X)}
{Q_n(Z\in B|X^\prime)} \leq e^\alpha \right) \geq 1-\gamma
$$
\noindent where 
$$X = (X_1,\ldots,X_{n-1},X_n),\ X^\prime = (X_1,\ldots,X_{n-1},X_{n+1})$$
\noindent (i.e., $X \sim X^\prime$), and 
the probability is with respect to the $n+1$-fold product measure
$P^{n+1}$ on the space $\mathcal{X}^{n+1}$, that is,
$X_1,\ldots, X_{n+1} \stackrel{\text{iid}}{\sim} P$.
\end{defn}

\noindent We also give the ``random'' analog of the $(\alpha,\delta)$-Differential Privacy:

\begin{defn}[$(\alpha,\eta,\gamma)$-Random Differential Privacy]
We say that a randomized algorithm $Q_n$ is $(\alpha,\eta,\gamma)$-Randomly
Differentially Private when:
$$
\mathbb{P}\left( \forall B \subseteq \mathcal{Z}, \
Q_{n}(Z\in B|X) \leq e^\alpha Q_n(Z\in B|X^\prime) + \eta(n) \right) \geq 1-\gamma
$$
\noindent where $\eta$ is negligible (i.e., decreasing faster than any inverse polynomial).
\end{defn}

We note that
\cite{MKAGV08} also
consider a probabilistic relaxation of
DP. However, their relaxation is quite different than
the one considered here.  Namely, their relaxation bounds the probability that the differential privacy criteria is not met, but where the probability is taken with respect to the randomized algorithm itself.  Our relaxation takes the probability with respect to the generation of the data itself.  
The following result is clear from the definition of random differential privacy.

\begin{prop}
$(\alpha,\gamma)$-RDP is a strict relaxation of $\alpha$-DP.
That is, if $Q_n$ is DP then it is also RDP.
However, there are RDP procedures that are not DP.
\end{prop}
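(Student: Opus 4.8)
The plan is to establish the two assertions separately: first that $\alpha$-DP implies $(\alpha,\gamma)$-RDP for every $\gamma \ge 0$, and then that the converse fails by exhibiting a mechanism that is RDP but violates DP.

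For the first (easy) direction, I would observe that the event appearing inside the probability in the definition of $(\alpha,\gamma)$-RDP,
$$
E = \left\{ \forall B \subseteq \mathcal{Z}, \ e^{-\alpha} \le \frac{Q_n(Z \in B \,|\, X)}{Q_n(Z \in B \,|\, X')} \le e^\alpha \right\},
$$
is a statement about the pair of neighboring databases $(X, X')$, which by construction differ only in the last coordinate and so satisfy $X \sim X'$. If $Q_n$ satisfies $\alpha$-DP, then by definition the ratio bound holds for \emph{every} pair of neighbors, and in particular for the random pair generated by $P^{n+1}$. Hence $\mathbb{P}(E) = 1 \ge 1 - \gamma$ for any $\gamma \ge 0$, so $Q_n$ is $(\alpha,0)$-RDP and therefore $(\alpha,\gamma)$-RDP. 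No probabilistic argument is needed here, since the event holds surely.

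For strictness, I would produce a concrete counterexample exploiting the gap between the worst-case replacement demanded by DP and the typical replacement tolerated by RDP. Take $\mathcal{X} = \mathbb{R}$ and let $P$ be any distribution supported on the bounded interval $[0,1]$. Consider the mechanism that releases the noisy mean $Z = \bar{X} + \mathrm{Lap}(1/(n\alpha))$, where the Laplace scale is calibrated to the sensitivity $1/n$ that the mean exhibits when all observations lie in $[0,1]$. For any two neighbors whose entries all lie in $[0,1]$, the mean changes by at most $1/n$, so the density ratio is bounded by $e^\alpha$; since $P$ is supported on $[0,1]$, the randomly redrawn coordinate $X_{n+1}$ lies in $[0,1]$ with probability one, and the event $E$ above holds almost surely. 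Thus the mechanism is $(\alpha,0)$-RDP. It is not $\alpha$-DP, however: because $\mathcal{X} = \mathbb{R}$ is unbounded, a neighbor obtained by replacing a coordinate with an arbitrarily large real number changes the mean without bound, so no finite Laplace scale can maintain the ratio bound over all neighbors.

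The main obstacle is the strictness half, and within it the one point requiring care is matching the quantifiers: RDP constrains only those neighbors reachable by redrawing from $P$, whereas DP constrains all neighbors in $\mathcal{X}^n$. The counterexample is engineered precisely so that the support of $P$ is a strict subset of $\mathcal{X}$, which severs the two quantifier ranges and lets a mechanism be calibrated to the former while failing on the latter. The first half, by contrast, is immediate once one notes that a surely-true event has probability one.
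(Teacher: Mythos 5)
Your proof is correct. The paper does not actually supply a proof of this proposition---it declares the result ``clear from the definition'' and only gestures at the key point in the accompanying remark, namely that RDP requires the ratio bound to hold almost everywhere with respect to the unknown measure $P$ while DP requires it uniformly over all of $\mathcal{X}^n$. Your argument formalizes exactly that distinction: the first half correctly notes that a surely-true event has probability one, and your strictness example (noisy mean with $\mathcal{X}=\mathbb{R}$ but $P$ supported on $[0,1]$, so the Laplace scale is calibrated to the sensitivity on the support rather than on all of $\mathcal{X}$) is a valid concrete witness of the quantifier gap the paper's remark describes, so you have in effect filled in the proof the authors omitted.
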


\begin{remark}
Although an $\alpha$-DP procedure fulfils the requirement of
$(\alpha,0)$-RDP, the converse is not true.  The reason is that the
latter requires that the condition (that the ratio of densities be
bounded) holds almost everywhere with respect to the unknown
measure, whereas DP require that this condition holds uniformly
everywhere in the space.
\end{remark}

We next show an important property of the definition, namely, that
RDP algorithms may be composed to give other RDP algorithms with
different constants.  The analogous composition property for DP was
considered to be important because it allowed rapid development of
techniques which release multiple statistics, as well as techniques
which allow interactive access to the data.

\begin{prop}[Composition]
Suppose $Q,Q^\prime$ are distributions over
$\mathcal{Z},\mathcal{Z}^\prime$ which are $(\alpha,\gamma)$-RDP and
$(\alpha^\prime,\gamma^\prime)$-RDP respectively.  The following
distribution $C$ over $\mathcal{Z}\times \mathcal{Z}^\prime$ is
$(\alpha+\alpha^\prime,\gamma+\gamma^\prime)$-RDP:
$$
C(Z,Z^\prime|X) = Q(Z|X)\cdot Q^\prime(Z^\prime|X).
$$
\end{prop}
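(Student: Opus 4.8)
The plan is to exploit the fact that RDP is a probabilistic statement about the data: for each of $Q$ and $Q'$ there is a ``good event'' over the draw $X_1,\ldots,X_{n+1}\sim P^{n+1}$ on which the ordinary DP ratio bound holds, and each such event has large probability. The crucial observation is that $Q$ and $Q'$ take the \emph{same} input $X$ and are assessed against the \emph{same} neighboring pair $X\sim X'$ built from the same $n+1$ samples, so their good events both live in the common probability space $\mathcal{X}^{n+1}$ and may be intersected.

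Concretely, first I would define
$$E = \left\{(x_1,\ldots,x_{n+1}) : \forall B\subseteq\mathcal{Z},\ e^{-\alpha} \le \frac{Q(Z\in B|X)}{Q(Z\in B|X')} \le e^{\alpha}\right\}$$
and the analogous event $E'$ for $Q'$ with parameter $\alpha'$, where in each case $X,X'$ are the neighboring databases constructed from the realization exactly as in the definition. By hypothesis $\mathbb{P}(E)\ge 1-\gamma$ and $\mathbb{P}(E')\ge 1-\gamma'$, so a union bound gives $\mathbb{P}(E\cap E')\ge 1-\gamma-\gamma'$. It then suffices to show that on the event $E\cap E'$ the product distribution $C$ obeys the $(\alpha+\alpha')$-DP ratio bound, which is precisely the $(\alpha+\alpha',\gamma+\gamma')$-RDP conclusion.

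For that deterministic step I would run the standard DP composition argument pointwise on $E\cap E'$. Fix a realization in $E\cap E'$ and a measurable $A\subseteq\mathcal{Z}\times\mathcal{Z}'$, and write $A_z=\{z':(z,z')\in A\}$ for its slices. By Fubini, $C(A|X)=\int_{\mathcal{Z}} Q'(A_z|X)\,dQ(z|X)$. On $E'$ the set bound for $Q'$ holds for every slice, so $Q'(A_z|X)\le e^{\alpha'}Q'(A_z|X')$ for each $z$. Substituting, and then using that the set bound for $Q$ upgrades to the integral inequality $\int f\,dQ(\cdot|X)\le e^{\alpha}\int f\,dQ(\cdot|X')$ valid for every measurable $f$ with $0\le f\le 1$ (approximate $f$ by simple functions and pass to the limit), applied to $f(z)=Q'(A_z|X')$, yields $C(A|X)\le e^{\alpha+\alpha'}C(A|X')$. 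The reverse inequality follows identically from the lower bounds, giving two-sided $e^{\pm(\alpha+\alpha')}$ control for all $A$.

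I expect the main obstacle to be the measure-theoretic bookkeeping in this last step rather than anything conceptual: one must check that $E$ is a genuine event in $\mathcal{X}^{n+1}$ (so that the union bound is legitimate), that Fubini applies to the product kernel $C$, and --- the one genuinely substantive point --- that the \emph{set-based} ratio bound for $Q$ can be promoted to an integral inequality against the $[0,1]$-valued slice function $Q'(A_z|X')$, since the definition only asserts the bound for indicator sets. Once that promotion is in hand, the additivity of the exponents to $\alpha+\alpha'$ and of the failure probabilities to $\gamma+\gamma'$ both fall out immediately.
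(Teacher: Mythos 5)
Your proposal is correct and follows exactly the route the paper intends: the paper's entire proof is the one-line remark that the result is ``an application of the union bound combined with the standard composition property of differential privacy,'' and your argument is precisely that --- intersect the two good events in the common space $\mathcal{X}^{n+1}$ via the union bound, then run the standard DP composition pointwise on the intersection. Your additional care about promoting the set-based ratio bound to an integral inequality against the slice function is a detail the paper glosses over, but it is a correct and standard filling-in, not a different approach.
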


This result is simply an application of the union bound combined with
the standard composition property of differential privacy.  As an
example, suppose it is required to release $k$ different statistics of
some data sample.  If each one is released via a
$(\alpha/k,\gamma/k)$-RDP procedure, then the overall release of all
$k$ statistics together achieves $(\alpha,\gamma)$-RDP.  A similar result holds for the composition of $(\alpha,\delta,\gamma)$-RDP releases.

\section{RDP Sparse Histograms}

We first give a technique for the release of a histogram which works
well in the case of a sparse histogram, and which satisfies the
$(\alpha,\gamma)$-Random Differential Privacy.  We then compare the
accuracy of this method to a lower bound on the accuracy of a
$\alpha$-Differentially Private approach.

The basic idea is to not add any noise to cells with low counts.
This results in partitioning the space into two blocks and releasing a
noise-free histogram in one block, and use a differentially private
histogram in the other.  The partition will depend on the data itself.
For a sample $x_1,\ldots,x_n$, we denote:
$S = S(x_1,\ldots,x_n) = \left\{j : \theta_j = 0 \right\}.$

\noindent Then we consider the release mechanism:
\begin{equation}\label{eqn_hist_counting}z_j = \begin{cases}
\theta_j & j \in S \text { and } 2k \leq \gamma n \\
\theta_j + \frac{2}{n\alpha}L & \text{o/w}
\end{cases}\end{equation}

\begin{prop}
The random vector
$Z = (z_1,\ldots,z_k)$
as defined in (\ref{eqn_hist_counting}) satisfies the $(\alpha,\gamma)$-RDP.
\end{prop}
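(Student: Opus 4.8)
The plan is to reduce the randomized privacy guarantee to a single high-probability event about cell occupancy, and then to verify ordinary $\alpha$-DP conditionally on that event. First I would dispose of the trivial regime: when $2k > \gamma n$ the first branch of (\ref{eqn_hist_counting}) never fires, so every coordinate is $\theta_j$ perturbed by independent Laplace noise of scale $2/(n\alpha)$. Since changing one input point moves the histogram by exactly $2/n$ in $\ell_1$, this is just the histogram perturbation method and is $\alpha$-DP outright, hence $(\alpha,\gamma)$-RDP because every $\alpha$-DP mechanism is $(\alpha,\gamma)$-RDP. All the content therefore lives in the sparse regime $2k \le \gamma n$, where empty cells are released noise-free.

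Write $S = S(X)$ and $S' = S(X')$ for the index sets of empty cells of the two neighboring databases, and define the good event $G = \{S = S'\}$. The heart of the argument is that \emph{on $G$ the mechanism is exactly $\alpha$-DP}. Indeed, each coordinate in $S = S'$ is a point mass at $0$ under both $X$ and $X'$, contributing a factor $1$ to the density ratio; the remaining coordinates carry independent Laplace noise of scale $2/(n\alpha)$ around centers $\theta_j$ and $\theta'_j$ that agree in every cell except the two hit by $X_n$ and $X_{n+1}$, where they differ by $1/n$ each. The standard Laplace bound then gives a ratio at most $e^{(2/n)/(2/(n\alpha))} = e^{\alpha}$, and symmetrically at least $e^{-\alpha}$, for every measurable $B$. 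Off $G$, some cell is an atom under one database and a continuous law under the other, so the ratio can be infinite; this is precisely the rare failure that RDP is permitted to tolerate.

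What remains, and is the crux, is the bound $\mathbb{P}(G^c) \le \gamma$. I would characterize the failure exactly: $S \setminus S' \ne \emptyset$ holds iff $X_{n+1}$ lands in a cell avoided by all of $X_1,\ldots,X_n$, and $S' \setminus S \ne \emptyset$ holds iff $X_n$ lands in a cell avoided by $X_1,\ldots,X_{n-1},X_{n+1}$. With $p_j = P(B_j)$, each of these two events has probability $\sum_{j} p_j(1-p_j)^n$, so a union bound gives $\mathbb{P}(G^c) \le 2\sum_j p_j (1-p_j)^n$. The elementary inequality $x(1-x)^n \le x e^{-nx} \le 1/(en)$ bounds each summand, so $\mathbb{P}(G^c) \le 2k/(en) \le 2k/n$, and the sparsity hypothesis $2k \le \gamma n$ yields $\mathbb{P}(G^c) \le \gamma$. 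Together with the conditional DP statement, this shows the defining inequality of $(\alpha,\gamma)$-RDP holds on the set $G$, which has probability at least $1-\gamma$, exactly as the definition requires.

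The main obstacle is the bridge between the combinatorics of the mechanism and this probabilistic estimate: recognizing that privacy can be lost \emph{only} through a support mismatch, and then making the occupancy bound $\sum_j p_j(1-p_j)^n \le k/(en)$ line up with the threshold $2k \le \gamma n$ so that the two constants fit. The conditional $\alpha$-DP verification is routine once $G$ is isolated; I would, however, double-check the boundary bookkeeping (the case where $X_n$ and $X_{n+1}$ fall in a common cell, so that $\theta = \theta'$ and $G$ holds automatically) to confirm the union bound is not overcounting.
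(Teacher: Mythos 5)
Your proposal is correct, and its skeleton matches the paper's: isolate the bad event as a mismatch of the zero-cell sets $S$ and $S'$, verify ordinary $\alpha$-DP on the complement (where the two perturbed points both live in the Laplace block), and bound the bad event's probability by $\gamma$ using $2k\le\gamma n$. Where you genuinely diverge is in the key probabilistic estimate. The paper conditions on the vector of order statistics $T$ of all $n+1$ points: given $T$, the positions of $x_n$ and $x_{n+1}$ are exchangeable among the $n+1$ slots, and since the auxiliary set $S^\star=\{j:\ \text{cell }j\text{ contains at most one of the }n+1\text{ points}\}$ contains both $S$ and $S'$ and has $|S^\star|\le k$, a union bound gives a conditional failure probability of at most $2|S^\star|/(n+1)<\gamma$, which then integrates out. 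You instead compute the unconditional probability directly in terms of the cell masses $p_j$, getting $2\sum_j p_j(1-p_j)^n$ and invoking $x(1-x)^n\le 1/(en)$ to reach $2k/(en)\le\gamma$. Both arguments are distribution-free in their conclusions; the paper's exchangeability trick avoids any calculus and any reference to the $p_j$, while yours is more elementary, yields the slightly sharper constant $2k/(en)$ versus $2k/(n+1)$, and explicitly identifies the failure event as \emph{exactly} the support mismatch rather than merely containing it. You also dispose of the regime $2k>\gamma n$ (where the mechanism degenerates to the plain perturbed histogram and is $\alpha$-DP outright), which the paper handles only implicitly by assuming $2k\le\gamma n$ throughout. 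Your closing caution about $X_n$ and $X_{n+1}$ sharing a cell is harmless: in that case $\theta=\theta'$ and the density ratio is $1$, so the union bound only overcounts in your favor.
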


\noindent In demonstrating RDP, we take the sample
$x_1,\ldots,x_n,x_{n+1}$ and denote: $S = S(x_1,\ldots,x_n)$ and
$S^\prime = S(x_1,\ldots,x_{n-1},x_{n+1})$.  We consider the output
distribution of our method when applied to each of the neighboring
samples.  The event that the ratio of densities fail to meet the
requisite bound is a subset of the event where either $x_{n+1} \in S$
or $x_n\in S^\prime$, and when $2k \leq \gamma n$.  In the complement
of this event then the partitions are the same, and the differing
samples both fall within the block which receives the Laplace noise,
so the DP condition is achieved.  In demonstrating the RDP, we simply
bound the probability of the aforementioned event, conditional on the
order statistics.

\begin{proof}[Proof of proposition 5.1]
In the interest of space let the vector of order statistics be denoted $T = (x_{(1)},\ldots,x_{(n+1)})$.
Let
$S^\star(x_1,\ldots,x_n,x_{n+1}) = \left\{j : \sum_{i=1}^{n+1}\mathbf{1}\{x_i=j\} \leq 1 \right\}$.
We have that $S,S^\prime \subseteq S^\star$. We thus have
$$
\mathbb{P}(x_n\in S^\prime \text{ or } x_{n+1} \in S | T )
\leq 
\mathbb{P}(x_n\in S^\star \text{ or } x_{n+1} \in S^\star| T ).
$$
The latter probability is just the fraction of ways in which the order
statistics may be rearranged so that $x_n,x_{n+1}$ fall within
$S^\star$.  Due to the condition $2k \leq \gamma n$, we have
$|S^\star| \leq k \leq \frac{\gamma n}{2}$.  Therefore the number of
rearrangements having at least one of $x_n$ or $x_{n+1}$ in $S^\star$ is
bounded above
$$
\mathbb{P}\left(x_n\in S^\star \text{ or } x_{n+1} \in S^\star| T \right)
\leq \frac{2|S^\star|}{n+1} < \gamma.
$$
Therefore
\begin{align*}
\mathbb{P}(x_n\in S^\prime \text{ or } x_{n+1} \in S ) &\leq  \int_{\mathcal{X}^{n+1}}\mathbb{P}(x_n\in S^\prime \text{ or } x_{n+1} \in S | T )dP(T)
&\ leq \int_{\mathcal{X}^{n+1}}\mathbb{P}(x_n\in S^\star \text{ or } x_{n+1} \in S^\star | T )dP(T) \\
& < \gamma \int_{\mathcal{X}^{n+1}}dP(T)  \\
& = \gamma.
\end{align*}
Finally:
\begin{align*}
\mathbb{P}\left( \forall Z \subseteq \mathcal{Z}, e^{-\alpha} \leq \frac{Q_{n}(Z|X)}{Q_n(Z|X^\prime)} \leq e^\alpha \right)
&= 1 - \mathbb{P}(x_n\in S^\prime \text{ or } x_{n+1} \in S ) \\
&> 1 - \gamma.
\end{align*}
\end{proof}

\subsection{Accuracy}

Here we show that
$\delta(z)$ from (\ref{delta_z}) is close to $\theta$
even when the histogram is sparse.

%First consider the super-sparse case
%$\hat\theta = (1,0,\ldots, 0)$.
%We saw earlier that for ordinary
%DP we have
%$\mathbb{E} ||\hat \theta - \hat\theta^*||_1 \geq k/(n\alpha)$.
%But for RDP, it is clear that, for large $n$,
%$||\hat \theta - \tilde\theta||_1 =0$
%showing a dramatic increase in accuracy.

\begin{thm}
Suppose that $2k \leq \gamma n$.
Let $\theta^n(x_1,\ldots,x_n) = (\theta_1,\ldots, \theta_r, 0,\ldots, 0)$
for some $1\leq r < k$.
Then
$||\theta - \delta(z)||_1 = O_P(r/\alpha n)$.
\end{thm}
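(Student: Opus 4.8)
The plan is to separate the analysis into a \emph{perturbation} step (how far the noisy vector $z$ is from the true histogram $\theta$) and a \emph{projection} step (how the map $\delta$ from (\ref{delta_z}) affects that error). First I would record the structure of $z$ under the mechanism (\ref{eqn_hist_counting}). Since $2k \leq \gamma n$ by hypothesis and $\theta_j = 0$ exactly on the empty cells $j = r+1,\ldots,k$, the mechanism leaves every empty cell untouched, so $z_j = \theta_j = 0$ there, while adding an independent Laplace term only to the $r$ occupied cells, $z_j = \theta_j + \frac{2}{n\alpha}L_j$ for $j \le r$. Consequently the pre-projection error is supported on exactly $r$ coordinates,
$$\|z - \theta\|_1 = \frac{2}{n\alpha}\sum_{j=1}^r |L_j|.$$
This is the crux of why sparsity helps: the ordinary histogram perturbation pays noise for all $k$ cells, whereas here noise enters only the $r$ nonempty ones.

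Next I would control the sum $\sum_{j=1}^r|L_j|$. Each $|L_j|$ is an absolute standard (rate-one) Laplace variable with $\mathbb{E}|L_j| = 1$, and the $L_j$ are independent, so $\mathbb{E}\sum_{j=1}^r|L_j| = r$. A one-line application of Markov's inequality then yields $\sum_{j=1}^r|L_j| = O_P(r)$, and therefore $\|z-\theta\|_1 = O_P(r/(\alpha n))$.

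Finally I would pass from $z$ to $\delta(z)$ using the projection property. Because $\theta$ is a genuine histogram we have $\theta \in \Theta$, and $\delta(z)$ minimizes $\|z - \cdot\|_1$ over $\Theta$; hence $\|z - \delta(z)\|_1 \le \|z - \theta\|_1$. The triangle inequality then gives
$$\|\theta - \delta(z)\|_1 \le \|\theta - z\|_1 + \|z - \delta(z)\|_1 \le 2\|z - \theta\|_1 = O_P(r/(\alpha n)),$$
which is the claim.

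I do not expect a serious obstacle: the entire content sits in the first observation, that the noise lives on only $r$ coordinates, after which the projection step is a soft contraction argument using nothing beyond the definition of $\delta$ and the triangle inequality. The only point requiring a trivial amount of care is making the stochastic-order statement precise, and Markov's inequality supplies that. It is worth noting that the projection can redistribute error onto the zero cells, but the bound $2\|z-\theta\|_1$ accounts for any such redistribution automatically, so no separate argument is needed for the empty block.
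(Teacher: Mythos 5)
Your proof is correct. It shares the paper's overall skeleton --- the noise enters only the $r$ occupied cells, so $\|z-\theta\|_1=O_P(r/\alpha n)$, and a triangle inequality then transfers this to $\delta(z)$ --- but your treatment of the projection step is genuinely different and tighter than the paper's. The paper introduces the event $\mathcal{E}=\{L_j>-\tfrac{n\alpha}{2}\theta_j \text{ for all } j\le r\}$, discards its exponentially small complement, and then asserts $\|\delta(z)-z\|_1\le r/n$; that bound is stated without justification and implicitly relies on the geometry of the $\ell_1$ projection onto the lattice $\Theta$ when all coordinates of $z$ are nonnegative. You instead note that $\theta$ itself is a feasible point of $\Theta$, so the defining optimality of $\delta(z)=\arg\min_{\theta'\in\Theta}\|z-\theta'\|_1$ gives $\|z-\delta(z)\|_1\le\|z-\theta\|_1$ with no case analysis, and the triangle inequality yields the factor-two bound $\|\theta-\delta(z)\|_1\le 2\|z-\theta\|_1$. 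This eliminates the event $\mathcal{E}$ entirely and replaces the paper's unproved projection estimate with a one-line argument that works for any $z$. The stochastic part is the same in substance: the paper writes $W=\sum_j L_j=O_P(r)$ where you use $\sum_j|L_j|$, which has mean $r$, together with Markov's inequality; your version is the more careful one, since the $\ell_1$ error actually involves $|L_j|$ rather than $L_j$. Both arguments use the hypothesis $2k\le\gamma n$ only to ensure the mechanism releases the empty cells exactly, which you flag explicitly.
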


\begin{proof}
Let $L_1,\ldots, L_r \sim {\rm Laplace}$.
Let ${\cal E}$ be the event that
$L_j > - \frac{n \alpha}{2} \theta_j$ for all $1\leq j \leq r$.
Then ${\cal E}$ holds, except on a set of exponentially small probability.
Suppose ${\cal E}$ holds.
Let
$W = \sum_{j=1}^r L_j = O_P(r)$.
For $1\leq j \leq r$,
$z_j = \Bigl(\theta_j + (2L_j)/(n\alpha)\Bigr)$
For $j>r$,
$z_j = \theta_j =0$.
Hence $\|z-\theta\|_1 = O_P(r/\alpha n)$.
Furthermore $\|\delta(z)-z\|_1 \leq \frac{r}{n} \leq \frac{r}{\alpha n}$
Hence via the triangle inequality we have,
$||\delta(z) - \theta||_1 = O_P(r/\alpha n)$.
\end{proof}

We thus have a technique for which the risk is uniformly bounded above by $O(k/\alpha n)$ as with the DP technique, and which also enjoys the coordinate-wise upper bound on the risk.  However in this regime, the risk is no longer uniformly lower bounded with a rate linear in $k$, since the upper bound is linear in $r$ in the case of sparse vectors.

%%%IMO you can just say that on the nonzeros you just add laplace noise,
%%%so you get k'/an where k' = number of nonzeros.  Then you can say that
%%%the projection back to a lattice point in the simplex adds no more
%%%than k'/n (i.e., when the closest simplex point is in the opposite
%%%direction from the true thing), and that this last term is < k'/an for
%%%a<1, so the overall rate is c*k'/an. - R
%%%
%%%Assume that $2 k \leq \gamma n$ and let $s = |S|$. Then, for any $j \in S$,
%%%\[
%%%\tilde{\theta}_j = \frac{\left[ \hat{\theta}_j + L_j \frac{2}{\alpha n} \right]_{+}}{D},
%%%\]
%%%where $\{ L_j\}_{j \in S}$ are independent Laplace random variables with unit scale and
%%%\[
%%%D = \sum_{j \in S}\left[\hat{\theta}_j + L_j \frac{2}{\alpha n} \right]_{+}.
%%%\]
%%%Letting $\hat{\theta}_{\min} = \min_{j \in S} \hat{\theta}_j$, direct evaluation of the tail probability of a Laplace distribution and the union bound yield that $\hat{\theta}_j + L_j \frac{2}{\alpha n} > 0$ for all $j\in S$ with probability at least
%%%\[
%%%1 - \frac{s}{2} e^{- \hat{\theta}_{\min} n \alpha /2 }.
%%%\]
%%%Need to show that $D$ concentrates around its expectation. This is trickier than before (the value of the expectation is also trickier)
%%%
%%% Then,
%%%\[
%%%\sum_{j} |\tilde{\theta}_j - \hat{\theta}_j| = \sum_{j \in S} |\tilde{\theta}_j - \hat{\theta}_j| \approx
%%%\]
%%%
%%%\end{proof}
%%%

\section{RDP via Sensitivity Analysis}

We next demonstrate that RDP allows schemes for release of other kinds of statistics (besides histograms).  A common technique used to establish a differentially private technique is to use Laplace noise with variance proportional to the ``global sensitivity'' of the function \cite{DMNS06}.  We show that there is an analog of this technique for RDP.  We next demonstrate a method for the RDP release of an arbitrary function $g_n(x_1,\ldots,x_n) \in \mathbb{R}$.

We consider the algorithm which samples the distribution

\begin{equation}
\label{rdp_laplace}
Q_n(z|x_1,\ldots,x_n) \propto \text{exp}\left\{ \frac{-\alpha\left|z-g_n(x_1,\ldots,x_n)\right|}{s_n(x_1,\ldots,x_n)}\right\}
\end{equation}

It is well known that when $s_n$ is the constant function which gives an upper bound of the global sensitivity \cite{DMNS06} of $g_n$, this method enjoys the $\alpha$-DP.  As we allow $s_n$ to depend on the data we may make use of the local sensitivity framework of \cite{NRS07}.  There it is demonstrated that whenever:

\begin{equation}
\label{local_cond1}
\forall X \sim X^\prime\ s_n(X) \leq e^\beta s_n(X^\prime)
\end{equation}

\noindent and

\begin{equation}
\label{local_cond2}
\forall X\ \sup_{X^\prime \sim X}\left|g_n(X) - g_n(X^\prime)\right| \leq s_n(X)
\end{equation}

\noindent then (\ref{rdp_laplace}) gives $(2\alpha,\eta)$-DP with:

\begin{equation}
\label{local_eta}
\eta = e^{-\frac{\alpha}{2\beta}}
\end{equation}

\noindent (see \cite{NRS07} definition 2.1, lemma 2.5 and example 3).  In moving from DP to RDP we may now require that conditions (\ref{local_cond1}) and (\ref{local_cond2}) hold only with the requisite probability $1-\gamma$.  Then (\ref{rdp_laplace}) will achieve $(2\alpha,\eta,\gamma)$-RDP.

We consider a special subset of functions for which:

$$\sup_{X \sim X^\prime}\left|g_n(X) - g_n(X^\prime)\right| = n^{-1}\sup_{x,x^\prime}h(x,x^\prime).$$

\noindent Examples of functions satisfying this property are e.g., statistical point estimators \cite{smith:2008} and regularized logistic regression estimates \cite{pplr}. In particular in these cases it is assumed that $\mathcal{X}$ is some compact subset of $\mathbb{R}^d$ and then e.g., $\sup_{x,x^\prime}h(x,x^\prime) = \|x - x^\prime\|_2$ gives the diameter of this set.  

We replace conditions (\ref{local_cond1}) and (\ref{local_cond2}) with:

\begin{equation}
\label{rdp_cond1}
P\left(s_n(X) \leq e^\beta s_n(X^\prime)\right) \geq 1-\gamma_1
\end{equation}

\noindent and

\begin{equation}
\label{rdp_cond2}
P\left(n^{-1}h(x,x^\prime) \leq \min\{s_n(X),s_n(X^\prime)\} \right) \geq 1-\gamma_2.
\end{equation}

\noindent Note that $x,x^\prime$ are random draws from $P$ which are independent of the random vectors $X,X^\prime$.  The first condition simply requires (\ref{local_cond1}), to hold except on a set of measure $\gamma_1$.  The second condition implies that both $s_n(X)$ and $s_n(X^\prime)$ give upper bounds to the local sensitivity, except on a set of measure $\gamma_2$.  Putting these together along with the above considerations will yield a $(2\alpha,\eta,\gamma_1+\gamma_2)$-RDP method.  We note that we are essentially asking that $s_n(X)$ and $s_n(X^\prime)$ both give valid quantiles for the random variable $h(x,x^\prime)$, and that they give similar values with high probability.

We consider the empirical process based on $h$ and the data sample $X$ given by:

$$D(X,t) = \frac{2}{n}\sum_{i=1}^{n/2}\mathbf{1}\left\{ h(x_i,x_{i+n/2}) \leq t \right\}$$

\noindent This is exactly an empirical CDF for the distribution of $h(x,x^\prime)$, based on $n/2$ independent samples of $h(x,x^\prime)$.  We may anticipate that sample quantiles of this empirical CDF will be close to the quantiles from the true CDF, which we denote by $H(t) = P(h\leq t)$.  This is made precise by the DKW inequality (see e.g., \cite{massartdkw}), which in this case yields:

\begin{equation}
\label{dkw}
P\left( \sup_t |H(t) - D(X,t)| \geq \epsilon \right) \leq 2 e^{-n\epsilon^2}.
\end{equation}

Thus taking $d_\delta(X)$ to be the smallest $d$ with $D(X,d) = 1 - \delta$, and $h_{\delta^\prime}$ to give the $1-\delta^\prime$ quantile of $h$, with $\delta < \delta^\prime$, we have:

\begin{align*}
P(h(x,x^\prime) > d_\delta) &\leq \delta^\prime + P(d_\delta(X) < h_{\delta^\prime})\\
%&\leq \delta^\prime + P\left( D(X,h_{\delta^\prime}) - H(h_{\delta^\prime}) > (\delta^\prime-\delta)\right)\\
&\leq \delta^\prime + 2e^{-(\delta^\prime-\delta)^2n}.
\end{align*}

The second inequality comes from applying the monotone function $D(X,\cdot)$ to both sides of the inequality statement in the probability, and then rearranging, to yield $P\left( D(X,h_{\delta^\prime}) - H(h_{\delta^\prime}) > (\delta^\prime-\delta)\right)$ which is bounded due to the DKW inequality (\ref{dkw}).  Thus for some appropriate choice of $\delta,\delta^\prime$ we may take $s_n(X) = n^{-1}d_\delta(X)$, and thus achieve (\ref{rdp_cond2}).

Now to achieve (\ref{rdp_cond1}) we turn to the Bahadur-Kiefer representation of sample quantiles (see \cite{kieferquantile}).  We have that:

$$d_\delta(X) - h_{\delta} = \frac{D(X,h_\delta) - H(h_\delta)}{H^\prime(h_{\delta})} + O_p(n^{-3/4})$$

where $H^\prime$ is the derivative of $H$ (namely the density).  Hence we concentrate on the case when $h$ is a continuous random variable.  We find the ratio to be bounded in probability:

$$\frac{d_\delta(X)}{d_\delta(X^\prime)} \leq 1 + \frac{|d_\delta(X)-d_\delta(X^\prime)|}{d_\delta(X^\prime)} = 1 + \frac{O_p(n^{-1/2})}{h_\delta + O_p(n^{-1/2})}$$

where the final equality stems from using DKW to bound the $D(X,h_\delta) - H(h_\delta)$ and along with the triangle inequality to bound $|D(X,h_\delta) - D(X^\prime,h_\delta)|$.  This therefore demonstrates that:

$$\frac{d_\delta(X)}{d_\delta(X^\prime)} \leq 1 + O_p(n^{-1/2}) = O_p(e^{n^{-1/2}})$$

This means that for large enough $n$, and some probability $1-\gamma_2$, the ratio is bounded by $e^\beta$ where $\beta$ is polynomial in $n^{-1/2}$.  Examining (\ref{local_eta}) we find $\eta$ to be negligible for such a choice of $\beta$.  Therefore the use of $s_n(X) = n^{-1}d_\delta$ achieves the RDP as required.

We note that in principle this same approach would work, were we to replace $D(X,t)$ with the U-statistic process:

$$U(X,t) = \frac{1}{\binom{n}{2}}\sum_{i>j}\mathbf{1}\left\{ h(x_i,x_j) \leq t \right\}.$$

Though this is essentially another empirical CDF, it is based on non-independent samples since each $x_i$ participates in $n-1$ of the evaluations of $h$.  Nevertheless an analog of the DKW inequality still applies to this process, and we still have the same behavior of the quantiles (see e.g., \cite{
uquantile}).

\section{Privacy Concerns}

As stated above, we mainly use random differential privacy as a
vehicle for a theoretical exploration of the boundaries of differential
privacy.  Although it is a conceptually reasonable
weakening of differential privacy, whether it is appropriate to use in
practice requires more attention.  For example, if the hypothesized
adversary (of e.g., \cite{WZ} theorem 2.4), really had access to a
subset of $n-1$ of the data, and the one remaining element was the
only inhabitant of its histogram cell, then this would be immediately
revealed to the adversary.
%%%This is because the adversary's
%%%distribution over outputs for that cell is a point mass at zero,
%%%whereas the output from the agency with the full data would never
%%%contain zero in that position.
Whether this is a critical problem
depends on the application.

\section{Example}

We present two examples in which the RDP technique and the DP
techniques are compared on synthetic histogram data.  In the first
example the histogram has $k=25$ bins, all but two of which are empty
and $n=500$ points fall in to the other two.  Figure~\ref{fig_hist1}
shows the original data as well as the sanitized data due to
differential privacy and RDP.  Figure~\ref{fig_hist2} shows the
distribution of $L_1$ loss
from 100 simulations of both approaches.
We see that the risk of the RDP histogram is typically much
lower than that of the DP histogram, which occasionally has risk in
excess of 0.5 (recall that the maximum possible loss is 2 in the case
that the original and sanitized histograms had completely disjoint
support).

\begin{figure}[ht]
\centering
\subfigure[Original and synthetic data for DP (top) and RDP (bottom)]{
\includegraphics[scale=0.36]{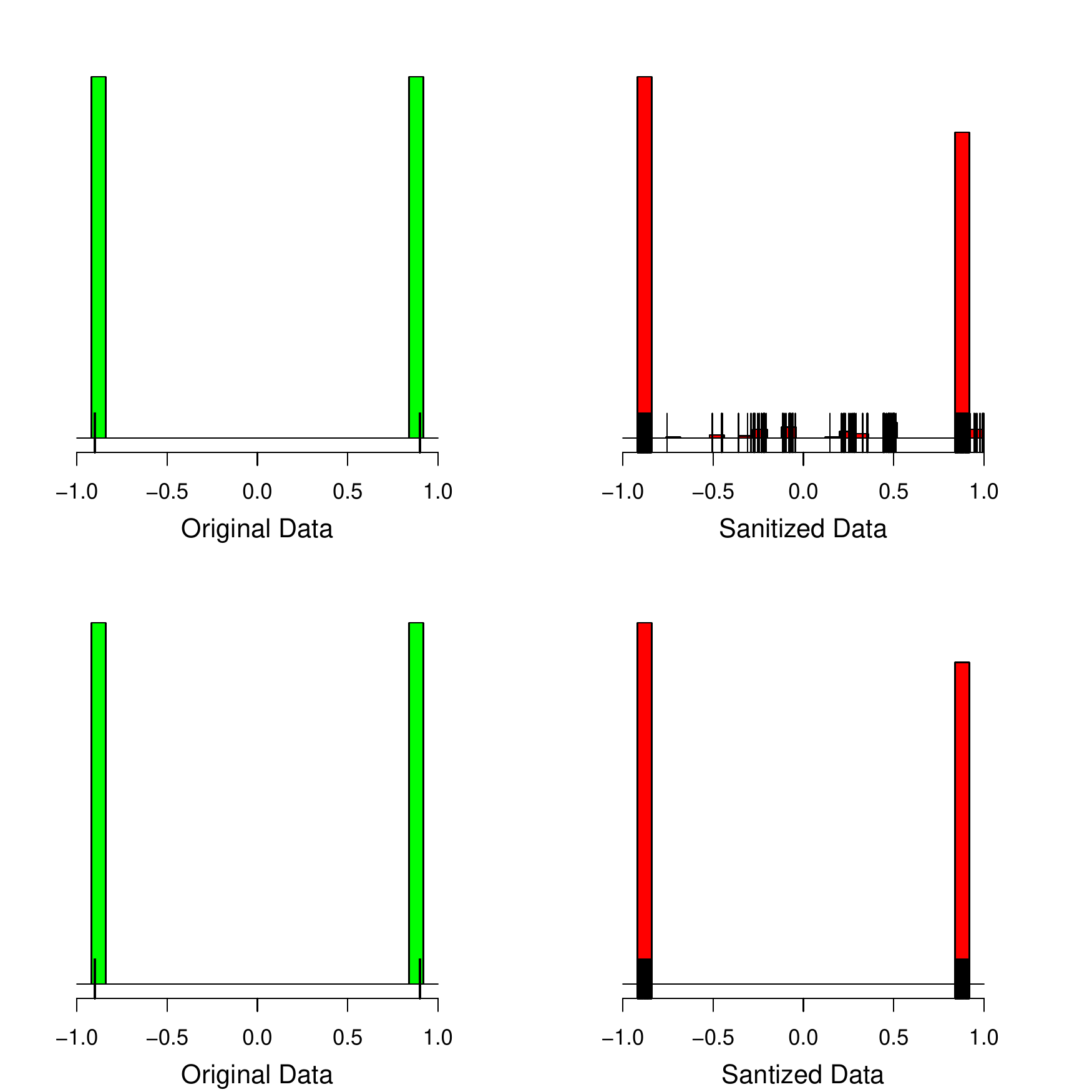}
\label{fig_hist1}
}
\subfigure[Empirical error distribution for DP (top) and RDP (bottom)]{
\includegraphics[scale=0.39]{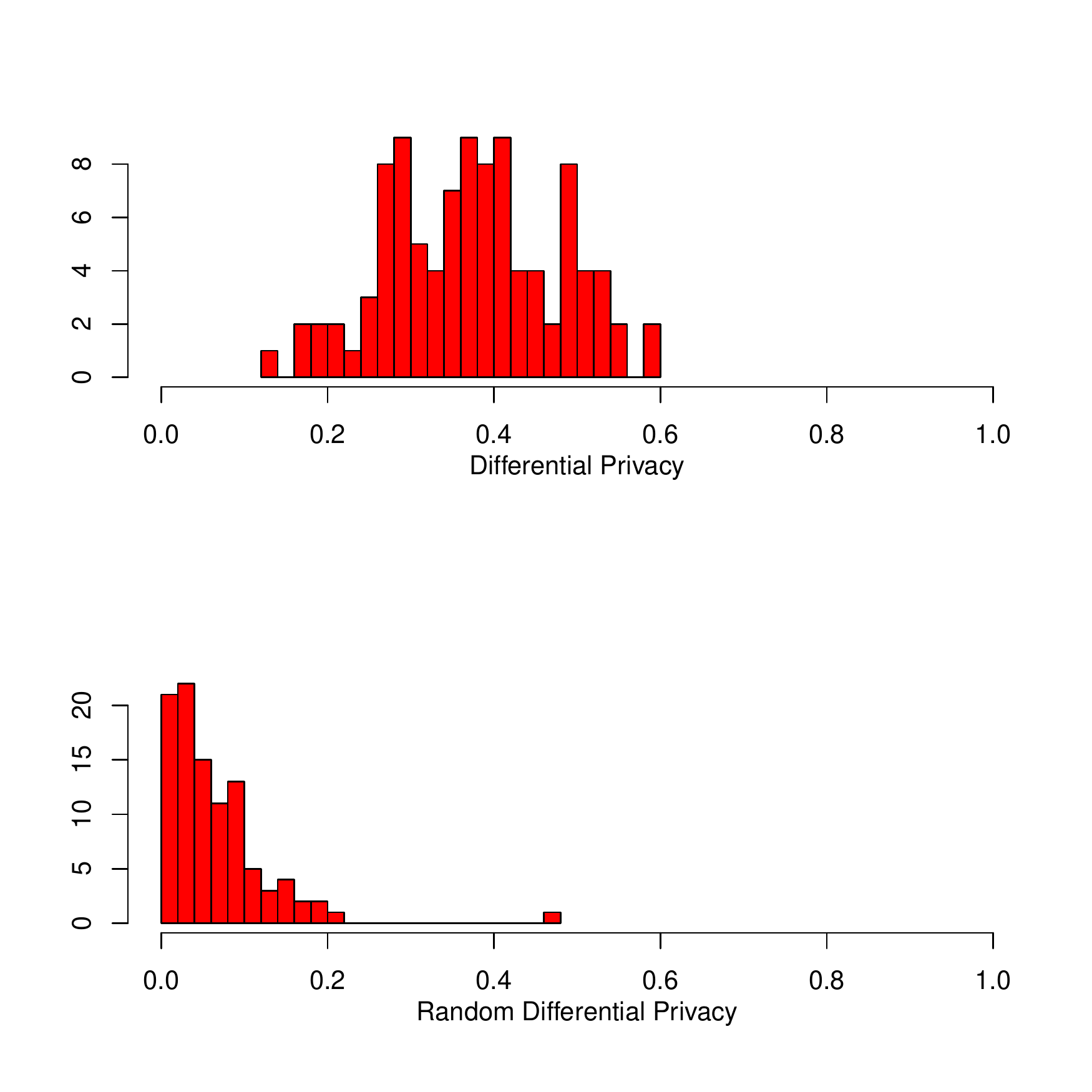}
\label{fig_hist2}
}
\label{fig_1dim}
\caption{A one dimensional example.}
\end{figure}

We present an analogous two dimensional example in
figure~\ref{fig_twodim}.  Here the histogram has
$k=400$ bins in which all but 16 are empty.
In this example we see that the RDP technique has
uniformly better loss than the DP technique.

\begin{figure}[ht]
\centering
\includegraphics[scale=0.5]{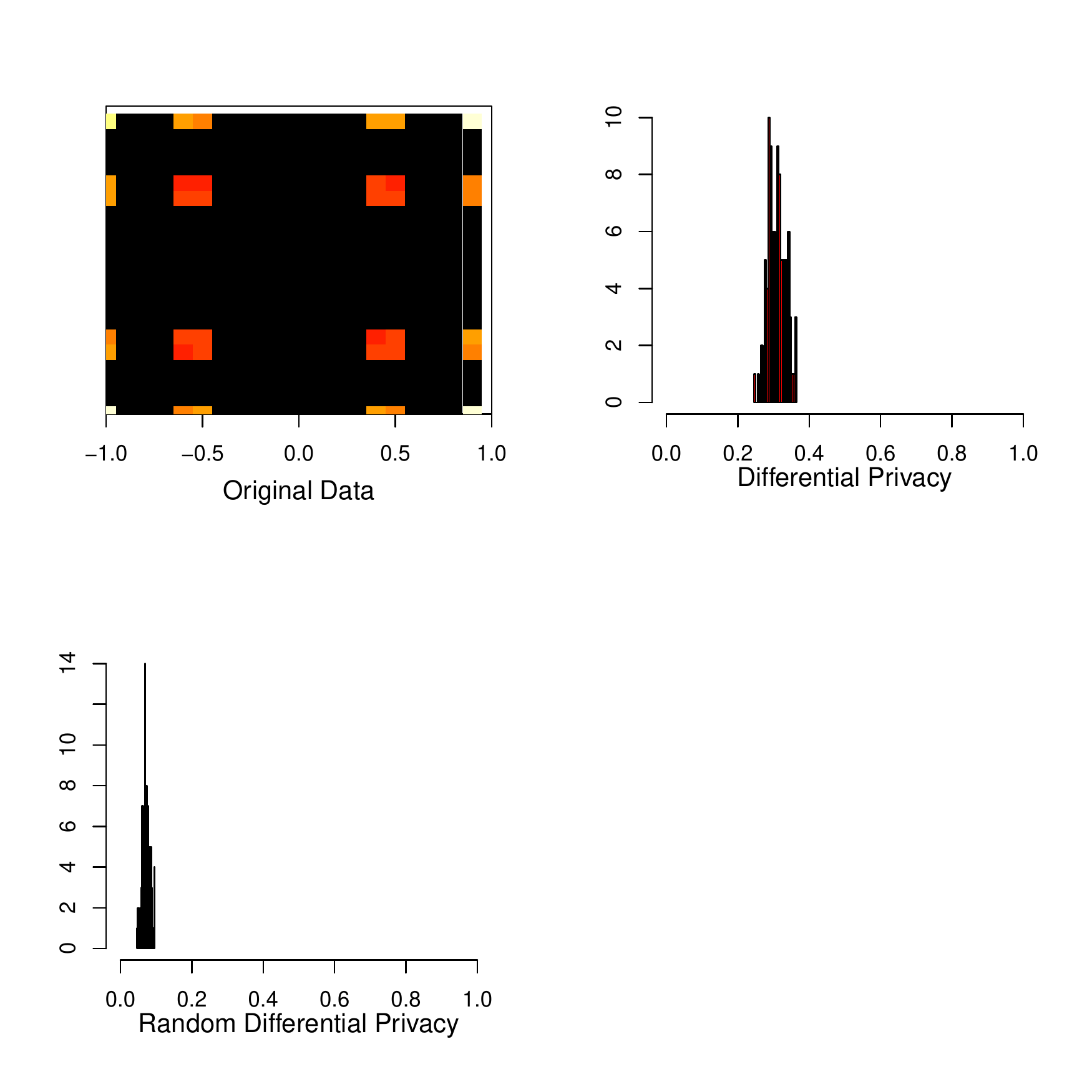}
\caption{Empirical error distributions
for a two dimensional histogram, displayed in the top left.}
\label{fig_twodim}
\end{figure}

\section{Conclusion}

We have introduced
a relaxed version of differential privacy---
random differential privacy---shown how to
apply it to histograms and examined the accuracy of the resulting method.  We also demonstrated some properties of our definition, and explained a basic construction for release of arbitrary functions of the data.
As we mentioned in the introduction,
we are not suggesting that
differential privacy should be abandoned and replaced by
random differential privacy.
However, we do think it is fruitful to consider various
relaxations of
differential privacy
to gain a deeper understanding
of the tradeoffs between
the strength of the privacy guarantee and the accuracy
of the data release mechanism.

In ongoing work we are extending this work to
allow for data dependent choices of the number of bins and
to allow for other density estimators besides histograms.
We are also considering other relaxations of
differential privacy.
We will report on these results in future work.

\subsubsection*{References}
\vspace{-11mm}
\bibliographystyle{plain}
\renewcommand{\refname}{}
\bibliography{sdp}

\end{document}